%% file: main.tex
\documentclass[a4paper,fleqn]{cas-dc}

\usepackage[numbers]{natbib}

\usepackage{makecell}
\usepackage{amssymb}
\usepackage{booktabs}
\usepackage{diagbox}
\usepackage{multirow}
\usepackage{framed}
\usepackage{threeparttable}
\usepackage{graphicx}
\usepackage{pifont}
\usepackage{tabularx}
\usepackage{booktabs}
\usepackage{hyperref}
\usepackage{amsmath}
\usepackage{amsthm}
\usepackage{microtype}

\newtheorem{Theorem}{Theorem}

\def\tsc#1{\csdef{#1}{\textsc{\lowercase{#1}}\xspace}}
\tsc{WGM}
\tsc{QE}

\begin{document}
\let\WriteBookmarks\relax
\def\floatpagepagefraction{1}
\def\textpagefraction{.001}

\shorttitle{Enforcing Cryptographic Distributed-VCS Access Control with No Trust on Servers}

\shortauthors{X. Xu et al.}

\title [mode = title]{Enforcing Cryptographic Distributed-VCS Access Control with No Trust on Servers}  

\author[1]{Xin Xu}
\cormark[1]
\ead{xuxin527@mail.tsinghua.edu.cn}
\author[2]{Zhen Yang}
\author[3]{Quanwei Cai}
\author[4]{Jingqiang Lin}
\author[5]{Liangqin Ren}
\author[6]{Bo Chen}
\author[1]{Yongfeng Huang}

\cortext[1]{Corresponding author}

\affiliation[1]{
  organization={Department of Electronic Engineering, Tsinghua University},
  city={Beijing},
  postcode={100084},
  country={China}
}

\affiliation[2]{
  organization={School of Cyberspace Security, Beijing University of Posts and Telecommunications},
  city={Beijing},
  postcode={100876},
  country={China}
}

\affiliation[3]{
  organization={Beijing Zitiao Network Technology Co., Ltd.},
  city={Beijing},
  postcode={100098},
  country={China}
}

\affiliation[4]{
  organization={University of Science and Technology of China},
  city={Hefei},
  postcode={230022},
  country={China}
}

\affiliation[5]{
  organization={Department of Electrical Engineering and Computer Science, University of Kansas},
  city={Lawrence},
  postcode={66045},
  country={USA}
}

\affiliation[6]{
  organization={Department of Computer Science, Michigan Technological University},
  city={Houghton},
  postcode={49931},
  country={USA}
}

\begin{abstract}
Version control systems (VCS), including central VCS (CVCS) and distributed VCS (DVCS), are widely adopted to manage changes to software code and various types of documents. Unlike CVCS, where entities obtain data from a central server, each entity in DVCS stores the entire repository and shares it independently. In VCS, existing access control schemes require the participation of a central server and cannot be deployed in a completely distributed scenario. Additionally, these schemes often fail to enforce fine-grained access control for write permissions, which is crucial for collaborative work in a distributed environment. In this paper, we propose a \underline{d}istributed \underline{v}ersion control system \underline{a}ccess \underline{c}ontrol scheme (named DVAC), which enforces cryptographic access control on distributed user nodes based on attribute-based encryption (ABE) and attribute-based signature (ABS). DVAC is designed to enforce a cryptographic access control protocol for DVCS, which enables file granularity read and write separation access control without the support of a central server.
To ensure the integrity of the core version control functions in DVCS while protecting data security, DVAC incorporates a version control adaptation protocol. Additionally, DVAC leverages Ethereum smart contracts to maintain access control policies, ensuring distributed storage and trusted management of access policies. The architecture of DVAC is designed to seamlessly integrate with existing mature DVCS, such as Git, with minimal modifications. We have implemented a prototype of DVAC and integrated it with Git. A comprehensive performance evaluation was conducted to assess the overhead introduced by DVAC, and it was demonstrated that the overhead is modest.
\end{abstract}

\begin{keywords}
Access control \sep
Distributed system \sep
Version control \sep
Git \sep
Attribute-based cryptography \sep
Smart contract
\end{keywords}

\maketitle

\input{tex/1_introduction}
\input{tex/2_background}
\input{tex/3_overview}
\input{tex/4_design}
\input{tex/5_disac}
\input{tex/6_security_analysis}
\input{tex/7_evaluation}
\input{tex/8_discussion}
\input{tex/9_conclusion}
\input{tex/10_appendix}

\bibliographystyle{cas-model2-names}

\bibliography{references}

\end{document}

%% file: tex/1_introduction.tex
\section{Introduction}
\label{sec:introduction}
Version control systems (VCS) are crucial for managing data changes and enabling project collaboration. They can be standalone or built-in. VCS can be categorized as central (CVCS) or distributed (DVCS). CVCS, like CVS \cite{cvs} and SVN \cite{svn}, rely on a central server for reference versions, with clients synchronizing their modifications. DVCS, such as Git \cite{git.com} and Veracity \cite{veracity}, use a peer-to-peer approach, where each peer maintains its own repository with a complete change history and synchronizes with other peers. DVCS are more efficient and flexible than CVCS, as they eliminate single points of failure. In DVCS, each entity works on its local repository and performs common operations without communication. Entities can create, maintain, or delete local branches independently. They can also choose which modifications to share and store copies pulled from others, providing data backup and avoiding single points of failure.

DVCS introduce novel challenges to data access control. In this decentralized system, peers have the ability to exchange modifications directly, circumventing the need for a central server to enforce access control policies. Moreover, collaboration in DVCS necessitates the segregation of read and write permissions, thereby amplifying the complexity of access control. The intricacies of access control in DVCS can be succinctly encapsulated as follows: 1) \textbf{Inability to depend on a central server}; 2) \textbf{Uniform implementation across all peers}; 3) \textbf{Separate control of read and write permissions}.

Existing access control schemes for cloud storage and CVCS \cite{Ruj2012Privacy,thwin2019blockchain,wang2019secure,li2024revocable,patil2024secure} are not applicable for DVCS because they require the cooperation of a central server to enforce access control. For example, schemes based on proxy re-encryption \cite{thwin2019blockchain,wang2019secure} and attribute-based cryptosystems \cite{Ruj2012Privacy,li2024revocable,patil2024secure} all rely on a server for various tasks such as re-encryption, verification, and signature checking. 
Distributed access control schemes \cite{shafagh2020droplet, yang2024attribute, li2025efficient, zhang2018bads,exceline2024flexible} also do not meet the requirements of DVCS, as they lack support for write permission control, which is essential for collaboration. These schemes primarily focus on preventing unauthorized read access to sensitive data. However, they are not suitable for DVCS. The distributed read and write framework Disac ~\cite{xu2019enforcing} lacks efficient metadata management and experimental verification. Current access control schemes for Git \cite{gitolite,gitcrypt,git-remote-gcrypt,xu2023gringotts} also face challenges in extending to distributed scenarios, lack write access control, rely on a central server, or involve excessive key maintenance overhead. Therefore, none of the existing access control schemes can effectively address the access control difficulties in DVCS.

In this paper, a novel distributed access control scheme named DVAC is introduced, specifically tailored for DVCS. DVAC utilizes attribute-based encryption (ABE), attribute-based signature (ABS), and smart contracts to achieve its objectives. 
The contributions of this work are summarized as follows:

\begin{itemize}
  \item 
  DVAC provides enforced cryptographic access control for a fully distributed VCS. DVAC can consistently control access to files on distributed user nodes without requiring a central server's involvement.
  DVAC provides granular access control for each file, with permissions specified by the data owner to define the data scope and its corresponding access control policies.
  The access control policy is managed in a distributed and trusted manner by an Ethereum-based smart contract, ensuring consistent multi-node control and distributed access control implementation.

  \item DVAC provides a read and write separation access control protocol and a version control adaptation protocol for DVCS, ensuring data security without disrupting the normal functions of DVCS. We implement read-write separation access control based on attribute-based encryption (ABE) and attribute-based signature (ABS). Users whose attributes do not satisfy the read policy cannot decrypt the ciphertext associated with the data, and modifications made by users whose attributes do not satisfy the write policy are rejected and not accepted by other users. DVAC offers an adaptation protocol for the core version control function of DVCS, ensuring cryptographic protection for data in the repository and on the network, without affecting the version control operation in the work-space. As a result, DVAC prevents unauthorized manipulation of data stored by individual peers without undermining the efficient and flexible version control and collaboration provided by DVCS.
 
  \item 
  We implemented the DVAC prototype integrated with Git, offering trusted metadata management combined with the Ethereum blockchain and lightweight key management, regardless of the number of users.
  DVAC's architecture is highly compatible with existing DVCS and requires minimal configuration.
  Each access control policy is uniquely associated with the data (i.e., files), and a smart contract is utilized to ensure that the access control policy remains under the control of the data owner while being accessible to all entities.
  We have successfully implemented a prototype of DVAC and integrated it with Git, demonstrating its practical application.
  The performance evaluation of DVAC indicates that the introduced overhead is within acceptable limits for practical usage scenarios.
\end{itemize}

The rest of the paper is organized as follows.
Section~\ref{sec:background} introduces the background and related works. 
We present the system overview and threat model in Section~\ref{sec:systemoverview}, and detail the specific protocols and algorithms of DVAC in Section~\ref{sec:scheme}.
Then, we describe how to integrate DVAC with Git in Section~\ref{sec:disacingit}.
The security is analyzed in Section~\ref{sec:security},
and the implementation details and evaluation are provided in Section~\ref{sec:performance}.
Finally, Section~\ref{sec:conclusion} concludes this paper.

%% file: tex/2_background.tex
\section{Background and Related works}
\label{sec:background}
DVAC employs ABE and ABS to enforce access control for read and write operations in DVCS. Furthermore, smart contracts are utilized in DVAC to guarantee the availability and integrity of the access control policy. This section provides crucial background information on DVCS, ABE, ABS, and smart contracts, and concludes with a summary of pertinent research in the field.

\subsection{Distributed Version Control Systems}
DVCS improve project collaboration by providing participants with a complete repository and the ability to make local modifications. This flexibility empowers users to revise data locally, choose modifications to publish, and perform changes based on shared versions. In the absence of a central server, the project is redundantly stored across multiple participants, thereby eliminating single points of failure.

Git, a widely used DVCS, stores a snapshot of each version (called \verb+commit+) instead of just the differences between versions. Git comprises four primary components: workspace, stage, local repository, and remote repositories. Users manipulate files in the workspace, stage modifications using \verb+git add+, create new versions with \verb+git commit+, and share modifications by pushing to remote repositories with \verb+git push+. GitHub serves as a popular remote repository. Users employ commands such as \verb+git clone+ or \verb+git fetch+ to obtain repositories, \verb+git+ \verb+branch+ to create or delete branches, \verb+git checkout+ to switch branches or restore files, and \verb+git merge+ to combine versions. These operations are executed using the Git client.

\subsection{ABE and ABS}
ABE ensures that only entities with attributes satisfying a specified policy can successfully decrypt data \cite{sahai2005fuzzy}. Various ABE algorithms have been proposed, which can be classified into two categories: Key-Policy ABE (KP-ABE) and Ciphertext-Policy ABE (CP-ABE).

In KP-ABE \cite{goyal2006attribute}, access control policies are related to users' secret keys.
In CP-ABE \cite{bethencourt2007ciphertext}, the user's secret key is associated with their attributes, and the data owner specifies the access control policy in the ciphertext. The ciphertext can be decrypted by users whose attributes satisfy the specified policy. CP-ABE provides the data owner complete control over the data and is used in DVAC to enforce read access control. In recent years, a variety of CP-ABE algorithms \cite{yin2024attribute,huang2024efficient} adapted to practical scenarios have been proposed.

CP-ABE\footnote{Hereafter, ABE refers to CP-ABE  for simplicity.} consists of  four functions: $ABE.Setup$, $ABE.\\KeyGen$, $ABE.Encrypt$ and $ABE.Decrypt$.
\begin{itemize}
  \item $ABE.Setup(1^{\lambda})$: Takes a security parameter $1^{\lambda}$ as input and outputs a public parameter $PP_r$ and a master key $MK_r$.
  \item $ABE.KeyGen(MK_r, S_i)$: Takes the master key $MK_r$ and an attribute set $S_i$ of user $i$ as input, and outputs a secret key $SK_r$.
  \item $ABE.Encrypt(PP_r, AP_r, m)$: Takes the public parameter $PP_r$, an access policy $AP_r$, and a message $m$ to be encrypted as input, and outputs the ciphertext $C_m$.
  \item $ABE.Decrypt(PP_r, SK_r, C_m)$: Takes the public parameter $PP_r$, a secret key $SK_r$, and a ciphertext $C_m$ as input, and outputs the decrypted result $m'$.
\end{itemize}

ABS \cite{maji2011attribute} ensures that only entities with attributes satisfying the specified access control policy can generate valid signatures. ABS consists of the following four algorithms.

\begin{itemize}
  \item $ABS.Setup(1^{\lambda})$: Takes a security parameter $1^{\lambda}$ and returns a public parameter $PP_w$ and a master key $MK_w$.
  \item $ABS.KeyGen(MK_w, S_i)$: Takes the master key $MK_w$ and an attribute set$S_i$ for user $i$ and returns a secret key $SK_w$.
  \item $ABS.Sign(PP_w, AP_w, SK_w, m)$: Takes the public parameter $PP_w$, an access policy $AP_w$, secret key $SK_w$ and a message $m$, and returns a signature $\sigma$.
  \item $ABS.Verify(PP_w, \sigma, m)$: Takes the public parameter $PP_w$, a signature $\sigma$ and a message $m$. Returns 1 if the signature is valid, otherwise returns 0.
\end{itemize}

\subsection{Smart Contract}
Smart contracts, introduced by Nick Szabo in 1994, are ``computerized transaction protocols that execute the terms of a contract'' \cite{smartcontract}. They aim to fulfill common contractual conditions (e.g., payment terms, liens) without the need for intermediaries  \cite{christidis2016blockchains}. 

Smart contracts are facilitated by blockchain platforms like Ethereum \cite{wood2014ethereum} and Hyperledger Fabric \cite{hyperledger}. As a widely deployed blockchain, Ethereum enables developers to create smart contracts using languages like Solidity, compile them, and deploy them. The smart contracts are stored and executed on the Ethereum virtual machine. To deploy a smart contract, the owner initiates a transaction, creating a contract account. Any Ethereum account interacts with the smart contract through the contract account. When an Ethereum account invokes a function to update data stored on the Ethereum network, it submits a corresponding transaction, which is recorded on the Ethereum network.

\subsection{Related Works}
In this section, we divide the work into three categories: centralized access control schemes, distributed access control schemes, and access control schemes for Git.

\textbf{Centralized access control schemes.} In centralized access control schemes, a central server stores access control-related information (such as access control lists) and participates in the execution of controls. However, these schemes cannot scale to distributed scenarios because of the significant overhead imposed on access control enforcement entities. For example, in the scheme \cite{thwin2019blockchain,wang2019secure} based on proxy re-encryption, the entity needs to perform independent re-encryption for each version and each delegated user. Similarly, in schemes based on selective encryption \cite{Vimercati2007Over,Sabrina2013Enforcing}, the entity needs to dynamically maintain the key derivation graph, which is not suitable for a larger number of cooperators. 
In contrast, DVAC only requires each entity to perform one ABE encryption and ABS signature for each version, regardless of the number of cooperators.

\textbf{Distributed access control schemes.} Distributed access control schemes like ABE, ABS, and smart contracts have been used for fine-grained access control on distributed data. However, these schemes are not suitable for DVCS, as they either require a central server to assist and cannot meet the fully distributed requirements\cite{Ruj2012Privacy,Zhao2011Realizing,li2024revocable,patil2024secure}, or they cannot provide write access control and do not support user collaboration\cite{zhang2018bads,shafagh2020droplet,yang2024attribute, li2025efficient}.

Both schemes \cite{li2024revocable, patil2024secure} propose access controls for electronic health records (EHR). RVWABE-CA \cite{li2024revocable} introduces a weighted attribute-based encryption scheme with revocation and verifiability, while CP-ABSC-MIoT \cite{patil2024secure} offers a privacy-preserving access control scheme based on attribute signcryption. However, both focus solely on read permissions for EHR and omit write permission control. Additionally, both depend on a central Verification Server or Authorization Engine, limiting flexibility in distributed environments. Similarly, the schemes \cite{Ruj2012Privacy,Zhao2011Realizing} use ABE and ABS for data confidentiality and integrity, but also rely on a central server for revision checks.

BaDS \cite{zhang2018bads} uses smart contracts for access control in the Internet of Things (IoT), while Droplet \cite{shafagh2020droplet} develops a decentralized authorization service for fine-grained access control. Exceline et al. \cite{exceline2024flexible} propose an EHR access control mechanism based on smart contracts, but the plaintext data remains unprotected during the smart contract-assisted operations. Yang et al.\cite{yang2024attribute} introduce an attribute-based access control scheme for IoT using blockchain technology. Li et al.\cite{li2025efficient} present an efficient ciphertext-policy weighted attribute-based encryption scheme for multi-user collaborative access in cloud storage. However, these schemes mainly focus on confidentiality, lack support for write access control, and are not applicable to DVCS where read and write permissions are separate. Xu et al. \cite{xu2019enforcing} propose a simple distributed read/write framework, but metadata is protected only by digital signatures, requiring users to maintain the public key certificates of data owners. Additionally, the application and evaluation in real DVCS scenarios are not addressed.

\textbf{Access control schemes for Git.} Several schemes have been proposed for securing Git. Gitolite  \cite{gitolite} provides access control for Git in a central server model. Git-crypt \cite{gitcrypt} encrypts sensitive information in public repositories using Git filters \cite{gitfilter}. Git-remote-gcrypt \cite{git-remote-gcrypt} encrypts entire repositories, distributing decryption keys through GPG. GitHub Enterprise \cite{gitee} provide access control services based on the role-based access control (RBAC) model. Keybase-git \cite{gitkeybase} is built on the Keybase Encrypted File System (KBFS), and uses the ACL model for access control. In 2023, Xu et al. proposed Gringotts \cite{xu2023gringotts}, an end-to-end encrypted VCS with file-level read access control using ABE and branch-level write access control using ECDSA. However, Gringotts relies on a remote server for maintaining an encrypted shadow repository and lacks flexibility in DVCS scenarios. DVAC integrates with Git through Git filters, providing fine-grained protection for each file in the repository. Key management in DVAC is greatly simplified because users only need to maintain their own read and write attribute keys, and data owners only need to set access policies based on attributes, without managing other users' related data.

%% file: tex/3_overview.tex
\section{System Overview}
\label{sec:systemoverview}

For DVCS with distributed data storage, determining how to enforce and maintain consistent access control for distributed nodes is one of the most important challenges. 
The access control scheme should not require the participation of the central server, forcing the protection of file data on each distributed node, and the current access control scheme requires the participation of the server, which is easily bypassed by the node in DVCS.

\begin{table}[pos=htbp]
    \centering
    \caption{Symbol table.}
    \label{tab:symbol_table}
    \begin{tabularx}{\columnwidth}{@{}p{1.5cm}X@{}}
        \toprule
        Symbol & Description \\
        \midrule
        $SK_i^r$ & The read key of user $i$, used for ABE.Dec. \\
        $SK_i^w$ & The write key of user $i$, used for ABS.Sign. \\
        $ID_f$   & The unique identifier of the file. \\
        $ID_o$   & The unique identifier of a data owner. \\
        $AP_f^r$ & Read access policy for file $f$. \\
        $AP_f^w$ & Write access policy for file $f$. \\
        $Meta_f$ & Metadata for file $f$. \\
        $Cipher_f$ & Ciphertext of the file $f$ content. \\
        $Sig_f$   & ABS signature of the file $f$. \\
        List$_u$ & The collection of mappings between $ID_o$s and blockchain accounts. \\
        List$_f$ & The collection of all $ID_f$s created. \\
        List$_{AP}$ & The collection of all metadata (including historical versions) indexed by $ID_f$s. \\
        \bottomrule
    \end{tabularx}
\end{table}

We summarized the following three difficulties:
\begin{enumerate}
 \item Enforce cryptographic access control across all DVCS nodes and separately control read and write permissions;
\item Access control must be fully compatible with DVCS features such as version control;
\item Access control-related data (such as access policies) requires distributed and secure storage and trusted management to provide consistent access control across multiple nodes.
\end{enumerate}

In distributed version control scenarios, DVAC is designed to eliminate the user's reliance on a central server while ensuring accurate fine-grained read/write access control. Based on this premise, our objective is to minimize alterations and impact to existing DVCS and achieve seamless integration and user-friendliness of DVAC without requiring data migration or changes to normal operations. With these design goals in mind, we present a system overview of DVAC in this section.

As shown in Figure~\ref{fig:systemmodel}, DVAC comprises three modules: enforcing cryptographic access control (EAC), version control adaptation (VCA) and trusted metadata management (TMM). DVAC involves the smart contract deployed on the blockchain, distributed user repositories and attribute authorities. Attribute authorities are responsible for user registration, key distribution and maintenance (such as $PP$), and are associated with the initial deployment of DVAC, which is not detailed here. The smart contract is deployed on the blockchain and is responsible for managing and maintaining the metadata version to ensure its  correctness and validity. Users have a complete repository on their local storage, which is a distributed node in the DVCS. Following local modifications, users can opt to push them to other repositories. Users possess varying permissions for different files, categorized as data owners and readers/writers. Data owners have full control over the files they create and can specify and modify access policies, while readers/writers need to choose to read or modify data according to their own permissions.

\begin{figure}[pos=htbp]
\centering
\setlength{\abovecaptionskip}{-0cm}
\includegraphics[scale=0.59]{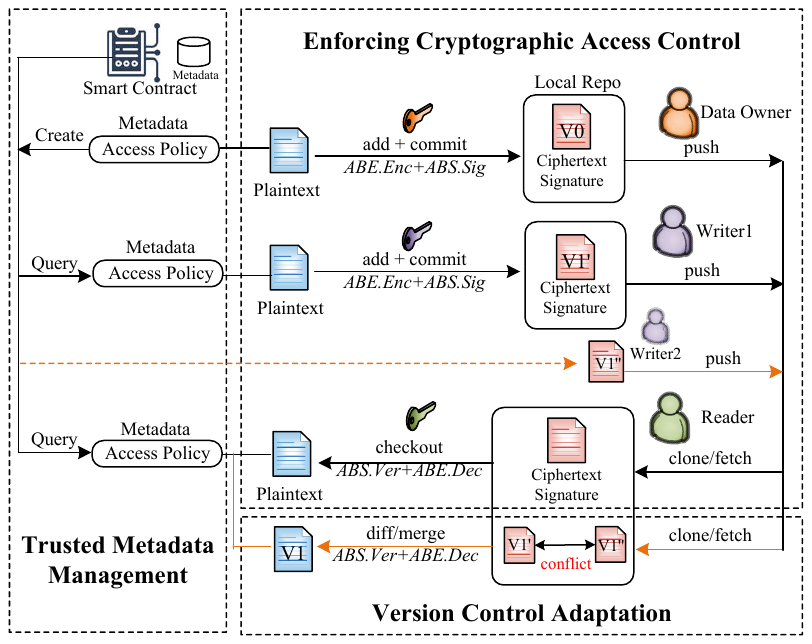}
\caption{The architecture of DVAC.}\label{fig:systemmodel}
\end{figure}

The EAC module is responsible for providing enforcement and consistent access control for files stored in distributed user repositories within DVCS. It also safeguards the historical versions of files stored in the user's local repository, which contain file ciphertexts and signatures.
Ensuring that the original DVCS version control function is not affected when the files in the repository are all ciphertext is a crucial problem to address. 
We designed the VCA module to ensure the accuracy of DVCS version control of encrypted files. When users execute version control commands, appropriate data conversion and adjustment are performed to ensure the accurate execution of commands.
The TMM module is essential for ensuring proper access control. It manages versions of metadata (including access policies) and maintains metadata security.

The attribute authority is responsible for distributing attribute keys to users. As shown in Figure~\ref{fig:systemmodel}, the data owner creates a plaintext file $V0$, encrypts and signs it according to the access policy, and stores it in the local repository. 
In this case of cryptographic protection, the file can be shared with other users, who can then perform access control operations based on access policies and their own attribute keys. The metadata containing the access policy is managed by the smart contract, and users have the capability to request metadata from the smart contract at any time during access control operations. 
Suppose that two modifiers create the $V1$ version of a file at the same time, which is called $V1'$ and $V1''$. This will cause version conflicts and require version control operations such as merging and comparing. DVAC will retrieve files in plaintext via the version control adaptation module for further DVCS operations.

\textbf{Threat Model.} In DVAC, each $AA$ is trusted and securely stores sensitive information, such as $MK_r$, $MK_w$, and user secret keys. $AA$ authenticates users and verifies their attributes before generating their keys. Public repositories hosting the data are honest-but-curious, storing all versions correctly but potentially performing unauthorized read or write operations. Valid users verify received versions using ABS signatures, decrypt data with their secret keys ($SK_r^i$), and generate valid ABS signatures for modified data using their secret keys ($SK_w^i$). Malicious users may engage in unauthorized activities, such as attempting to infer unauthorized data, uploading invalid modifications, or launching DDoS attacks. Collusion among malicious users is also possible to gain additional privileges.

%% file: tex/4_design.tex
\section{DVAC Scheme Design}
\label{sec:scheme}
This section provides a detailed introduction to the specific design and protocol process of DVAC, along with proposed mitigation ideas for the challenging problems raised in the previous section. 

Before delving into the specific protocols of DVAC, we establish the symbolic representation of the basic cryptographic algorithms as follows:
\begin{itemize}
  \item $C\leftarrow Enc(M,k)$: Use the key $k$ to perform symmetric encryption (e.g., AES) on the plaintext $M$ to obtain the ciphertext $C$.
  \item $M\leftarrow Dec(C,k)$: Use the key $k$ to perform symmetric decryption on the ciphertext $C$ to obtain the plaintext $M$.
  \item $s\leftarrow Sig(M,sk)$: Asymmetric cryptographic algorithm (such as RSA). Use the private key $sk$ to sign the message $M$ to obtain the signature $s$.
  \item $0/1 \leftarrow Verify(s, M, pk)$: Use the public key $pk$ to verify the signature $s$, output $1$ means the signature is valid, otherwise it is invalid.
\end{itemize}

\subsection{Initialization Phase}
\label{subsec:init}
During the initialization phase of DVAC, the $ABE.Setup$ and $ABS.Setup$ procedures are initially performed by the attribute authority to generate the associated public parameters ($PP_r, PP_w$) and master keys ($MK_r, MK_w$), which is followed by the user registration process. Each user needs to register with a trusted authority to obtain an authenticated identity ID. The protocol of register is as follows:

In real-world application scenarios, there are typically one or more trusted institutions responsible for verifying the user's actual identity in order for the user to register an account. The trusted institution provides the user with a DVCS account containing a verifiable ID.

\begin{enumerate}
 \item The trusted institution generates a pair of signature keys $(sk_a, pk_a)$;
 \item Authenticate the user $ID$ and generate a pair of signature keys $(sk_{ID}, pk_{ID})$ for the user;
 \item Distribute $(sk_{ID}, cert)$ for the registered user $ID$, where $cert=(Info_{ID}, s) = ((ID, pk_{ID}), s)$, $s=Sig(Info_{ID}, sk_a)$.
\end{enumerate}

In DVAC, users collaborate securely in DVCS with the help of blockchain. As a result, users go through the registration process using the blockchain smart contract. 
\begin{enumerate}
 \item The user $ID$ creates a blockchain account $Acc$;
 \item The user $ID$ initiates a registration request $Req(reg,\\ Acc, ID, cert, s')$ to the smart contract, where $s'=Sig(Acc|cert, sk_{ID})$.
 \item The smart contract verifies $s$, $0/1 \gets Verify(s, \\Info_{ID}, pk_a)$. If $0$ is returned, the error code ``cert error'' is printed.
 \item The smart contract verifies $s'$, $0/1 \leftarrow Verify(s',\\ Acc|cert, pk_{ID})$. If $0$ is returned, the error code ``user error'' is printed.
 \item The smart contract adds $(Acc, ID) \rightarrow List_u $.
\end{enumerate}

In DVCS, the data are stored in the form of files, and DVAC adds extra information for each file to enforce the access control. Firstly, we introduce file-related data structures designed in DVAC.
\begin{itemize}
  \item \textbf{File ID}: The identity of each file ($f$) need to be unique and easily derivable for users. In DVAC, the file ID $ID^f$ is defined as $ID^f = (Init\_CID|R\_Name|\\F\_Path)$, where $Init\_CID$ is the digest of the first commit information (e.g., creator, creating time, included files, etc.) of the repository, and $F\_Path$ is the relative file path. $Init\_CID$ and $R\_Name$ can uniquely refer to a repository, and $F\_Path$ can uniquely specify a file in the repository. 
  \item \textbf{Metadata}: Metadata refers to information about file and access policies. We define the metadata of a file as $Meta_{f}$ is $Meta_{f}=(ID^o, ID^f, AP^f_r, AP^f_w)$, where $ID^o$ and $ID^f$ are the unique identity of the data owner and the file, and $AP^f_r$ and $AP^f_w$ are the read and write access policies of the file, respectively. 
  $Meta_{f}$ is indexed by $ID^f$, bound to file ($f$) one by one, and the rest part $ID^o, ID^f, AP^f_r$ and $AP^f_w$ can only be modified by the data owner $ID^o$. 
  \item \textbf{File}: In DVAC, each file ($f$) is stored in the form of [$Meta_{f}$, $Cipher_{f}$, $Sig_{f}$], where $Cipher_f$ contains the ciphertext of $f$, while $Sig_f$ is the ABS signature of $Cipher_f$. 
\end{itemize}

Here, we establish a unique and easily generated $ID^f$ for each file to ensure that the user can accurately index the metadata using the $ID^f$, thereby avoiding confusion resulting from ID duplication and data owner preemption.

\subsection{Enforcing Cryptographic Access Control (EAC)}
In DVAC, users only need to possess their own pair of read and write attribute keys for file-level access control in various repositories. This reduces the burden of key management for users.  In DVCS environments, files resemble blockchain ledgers and can be stored in a distributed manner in local repositories of each participating user. Users who are unable to decrypt files using the read attribute key do not have read access, and similarly, users who are unable to generate valid signatures using the write attribute key do not have write access. We describe the EAC module in detail below, including EAC protocols and adjustments to basic DVCS file operations.

As shown in Figure~\ref{fig:protocol1.1}, when creating a file, the data owner needs to set the read and write access policies ($AP_r, \\AP_w$) for the file. DVAC will generate the file ID $ID^f$ and synthesize the metadata $(ID^o, ID^f, AP^f_r, AP^f_w)$ to initiate a metadata creation request to the smart contract. The metadata creation process can be synchronized with the process of the data owner writing the contents of the file. Upon completion of both processes, DVAC performs $ABE.Enc$ and $ABS.Sign$ according to the read and write access policy and the attribute key$(SK_r,SK_w)$, and stores the generated $[Cipher, Sig]$ in the local repository of the data owner. The data owner has the option to share the file $[Cipher, Sig]$ with other users.

Then, we show how to enforce the read-write file protocol on a distributed client as shown in Figure~\ref{fig:protocol1.2}. Suppose that the client stores three versions of the file $(V_0, V_1, V_2)$, and the file content is in the form of $[Cipher,Sig]$. To read file $V_2$, DVAC obtains the  $ID^f$ and queries the metadata of the corresponding version from the smart contract. After obtaining the metadata, DVAC first verifies the signature $Sig_2$ according to the write access policy $AP_w$. If the verification is successful, $User_2$ has the write permission and the file version $V_2$ is valid. 

\begin{figure}[pos=htbp]
\centering
\setlength{\abovecaptionskip}{-0cm}
\includegraphics[scale=0.65]{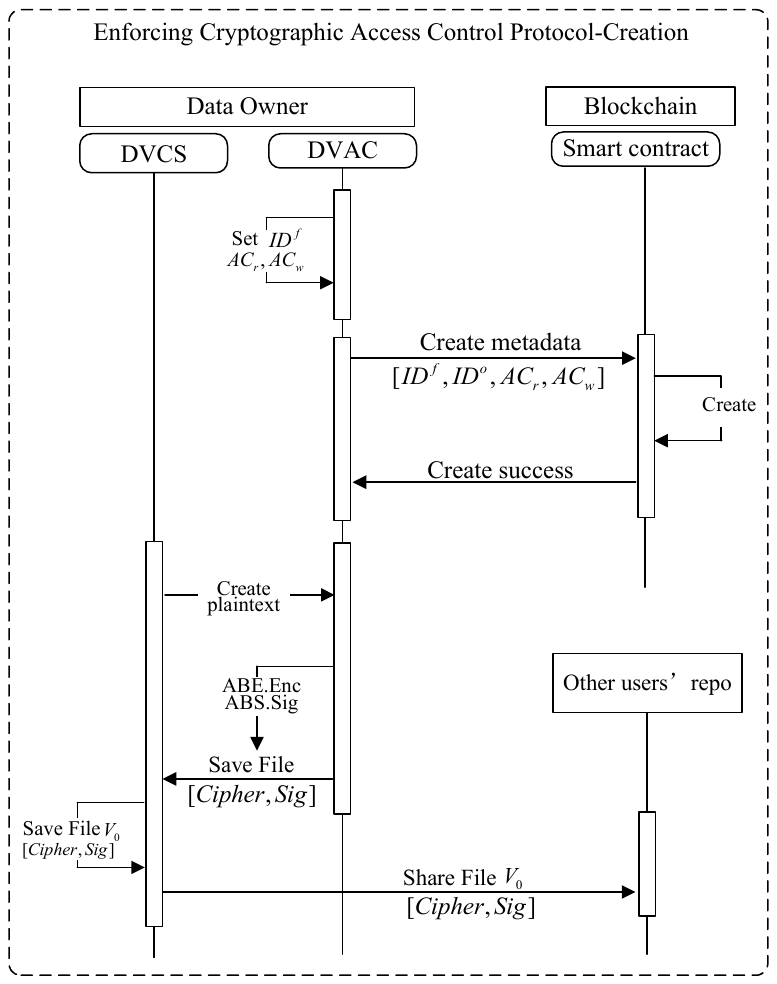}
\caption{Enforcing cryptographic access control protocol with file creation.}\label{fig:protocol1.1}
\end{figure}

Next, DVAC will decrypt $Cipher_2$ according to $AP_r$ and the user's attribute key. If the user's attributes meet the access policy $AP_r$, the decryption is successful, and the plaintext file will be displayed in the client workspace. If the signature is invalid, $User_2$ does not have write permission. DVAC will return an error warning and choose to verify the latest historical version $V_1$ according to user requirements until a valid version is verified. If decryption fails, DVAC returns an error warning indicating no permissions. When a user modifies a file and intends to store it as a new version, DVAC encrypts the file according to the access policy and public parameter, and signs the ciphertext with the user's attribute key $SK_w$, obtaining $[Cipher_3, Sig_3]$. The user's local repository stores the $V_3$ version of the file, and can share it with other users.

The EAC protocol ensures that access control can be distributed on user nodes without the involvement of a central server, enforcing cryptographic protection. The data obtained from other nodes is in the form of $[Cipher,Sig]$. The plaintext data can be read only after being verified and decrypted by the user. Unauthorized modification can only affect the local repository of the modifier and cannot be accepted by other users.

\begin{figure}[pos=htbp]
\centering
\setlength{\abovecaptionskip}{-0cm}
\includegraphics[scale=0.65]{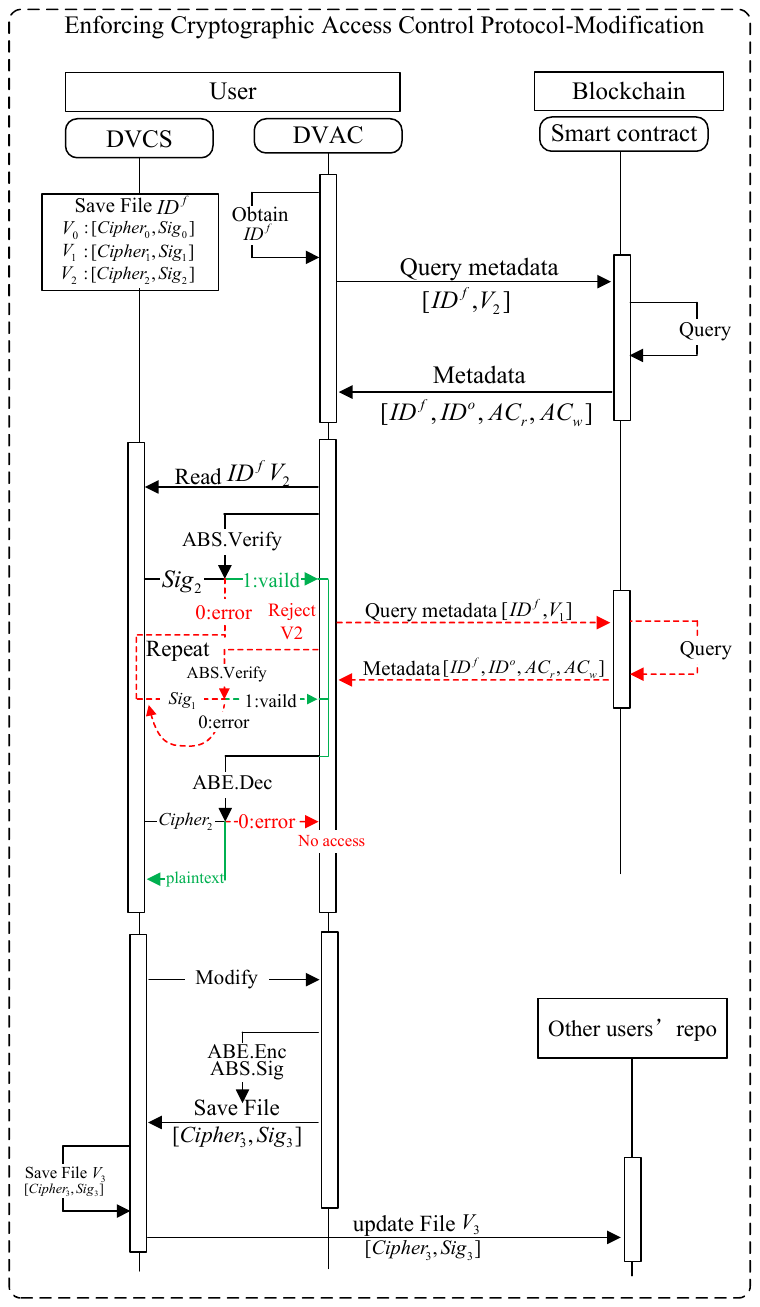}
\caption{Enforcing cryptographic access control protocol with other file operations.\protect\footnotemark}\label{fig:protocol1.2}
\end{figure}
\footnotetext{$V_0$ is created by the data owner. Assume that $V_1$ is the version modified by $User_1$ and $V_2$ is the version modified by $User_2$.}

\subsection{Version Control Adaptation (VCA)}
DVAC ensures that files in the user repository are encrypted. Users need to verify and decrypt $[Cipher,Sig]$ to obtain plaintext data. Each file version is encrypted with a distinct key, resulting in a different ciphertext. The version control function of DVCS, encompassing the maintenance, merging, and comparison of different versions, does not directly address the management of ciphertext. Consequently, we have developed an adaptation protocol for version control.

DVAC needs to release the cryptographic protection of related files when users execute version control commands and display the plaintext result of the command execution in the workspace. Meanwhile, the files in the repository remain encrypted. As shown in Figure~\ref{fig:protocol2}, the user's local repository stores three versions of file $ID^f$. The user initiates the diff/merge command for $V_1$ and $V_2$ in the workspace. Subsequently, DVAC sends a metadata query request to the smart contract with $ID^f$, $V_1$, and $V_2$ as additional information. Upon receiving the metadata corresponding to the two versions from the smart contract, DVAC retrieves $[Cipher_1,Sig_1]$ and $[Cipher_2,Sig_2]$ from the user repository, verifying and decrypting the files individually. In the event of a verification or decryption failure, a warning message is sent to the user workspace. DVAC then forwards the successful plaintext of $V_1$ and $V_2$ to the DVCS for automatic diff/merge operation, and the final result is displayed in the workspace.

\begin{figure}[pos=htbp]
\centering
\setlength{\abovecaptionskip}{-0cm}
\includegraphics[scale=0.60]{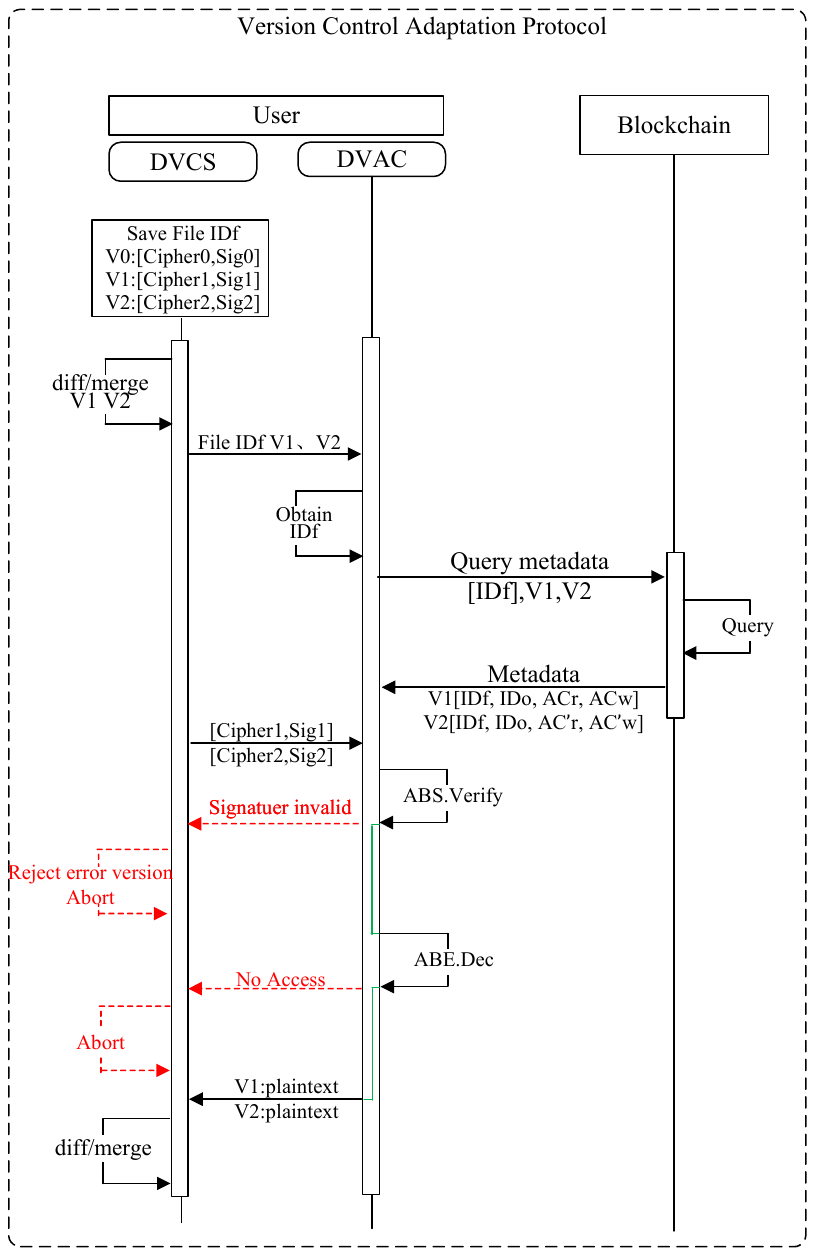}
\caption{Version control adaptation protocol.}\label{fig:protocol2}
\end{figure}

\subsection{Trusted Metadata Management (TMM)}
As described in Section~\ref{subsec:init}, DVAC metadata contains information about files,  as well as read and write access policies. The metadata is mapped to a file using a unique $ID^f$. Different versions of the file may correspond to different versions of the metadata. The security of metadata determines the effectiveness of access control. However, current DVCS lack a proper mechanism to store and maintain DVAC-related metadata for each file version. Once the maintenance of metadata has security vulnerabilities, it will pose additional risks to the security of DVCS data.

In a distributed scenario, if metadata is stored in DVCS in the same form as normal files, users cannot build a global view of the repository and are vulnerable to \textbf{metadata replay attacks}. The attacker may attempt to obtain revoked permissions, by replaying old metadata in which his permissions are not revoked. The victim, who has not received the latest version of the data, may accept this metadata and the attacker's revisions. Due to the inability of users to have a comprehensive overview of all file metadata, obtaining the required correct version of metadata independently is challenging. Additionally, metadata relies on $ID^f$ for file mapping and $ID^o$ for determining the data owner. Attackers may create a large number of files in advance and delete them to \textbf{preempt data owner attacks}. Even if the $ID^f$ is uniquely generated, there is no way to avoid the extreme situation where two data owners create files with the same name and the same path in one repository, leading to the problem of confusing the ownership of the data. The fundamental cause of these attacks is the inability of the data owner and user to build a global view of the repository to check if $ID^f$ has been allocated and obtain the corresponding valid metadata. However, building a global view will introduce the significant performance overhead, which becomes an obstacle in applying DVAC in the practical DVCS.

In order to solve the above problem, we have devised a blockchain-based mechanism for maintaining metadata, ensuring distributed and trustworthy management for DVAC-related metadata.
The characteristics of blockchain, such as addition-only, immutability, and distributed ledger, are highly suitable for fully distributed environments.
By leveraging smart contracts, we have developed a distributed and trusted mechanism for managing metadata.

As shown in Figure~\ref{fig:metadata}, the TMM mechanism encompasses four algorithms: metadata creation, metadata modification, owner information replacement, and metadata querying.

Users are required to submit a request $Req(function, \dots)$ to the smart contract, with the first parameter specifying the action the smart contract should undertake (register, create, modify, replace or query).

The smart contract includes two fundamental functions: 
$Exist(a,List)$ to check if  $a$ exists in $List$, and $Get(a,\\List)$ to obtain an item based on the index of $a$ in $List$. 

The creation $Md\_Create()$ and modification \\$Md\_Modify()$ of metadata necessitate the smart contract to validate the $ID^o$ and the correctness of the account binding  before appending the new metadata to $List_{AP}$, as shown in Figure~\ref{fig:metadata}. The distinction lies in the fact that metadata creation requires the retrieval of $List_f$ to ensure that the $ID^f$ is unused, thus avoiding metadata conflicts and resisting metadata preemption attacks. Owner information replacement $Md\_Replace()$ is divided into $ID^o$ replacement and bound account $Acc^o$ replacement, denoted by a $tag$. Metadata querying $Md\_Query()$ does not mandate authentication of the requester's identity, and the smart contract only indexes and outputs the metadata based on $ID^f$ and version $v$.

\begin{figure}[pos=htbp]
    \begin{framed}
    \raggedright{\textbf{Metadata Management}
    \underline{$Meta_{f} \leftarrow Md\_Create(ID^o,ID^f, AP^f_r, AP^f_w, Acc')$}:\\
 \begin{enumerate}
 \item[1.] Query $ID^o$, $(Acc^o, ID^o) \gets Get(ID^o, List_u)$;
 \item[2.] If $Acc'= Acc^o$, continue, otherwise output "error 1";
 \item[3.] Query $ID^f$, $0/1 \gets Exist(ID^f, List_f)$. If $0$ is returned, output "error 2". If $1$ is returned, $ID^f \rightarrow List_f$.
 \item[4.] $(ID^f,ID^o, AP^f_r, AP^f_w) \rightarrow List_{AP}$, output $Meta_{f}=(ID^f,ID^o, AP^f_r, AP^f_w)$.
\end{enumerate}
    \underline{$Meta_{f} \leftarrow Md\_Modify(ID^o,ID^f, AP^f_r, AP^f_w, v, Acc')$}:
 \begin{enumerate}
 \item[1.] Query $ID^o$, $(Acc^o, ID^o) \gets Get(ID^o, List_u)$;
 \item[2.] If $Acc'= Acc^o$, continue, otherwise output "error 1";
 \item[3.] Obtain $Meta_{f}$ of version $v$,  $Meta_{f}\gets Get(ID^f, List_{AP}, v)$;
 \item[4.] $AP^f_r, AP^f_w \rightarrow Meta_{f}$, output new $Meta_{f}$.
\end{enumerate}

    \underline{$Meta_{f} \leftarrow Md\_Replace(ID^f,ID^o, ID^{o'}/Acc^{o'}, tag)$}:
 \begin{enumerate}
 \item[1.] Query $ID^o$, $(Acc^o, ID^o) \gets Get(ID^o, List_u)$
 \item[2.] If $Acc'= Acc^o$, continue, otherwise output "error1";
 \item[3.] $tag = 1$:
  $(Acc^{o'}, ID^o)\rightarrow List_u$
 \item[4.] $tag=2$:\\
 1)  Obtain $Meta_{ID^f}$ of version $v$,  $Meta_{f}\gets Get(ID^f, List_{AP}, v)$;\\
 2)  $ID^{o'} \rightarrow Meta_{f}$.
\end{enumerate}
 \underline{$ Meta_{f} \leftarrow Md\_Query(ID^f, v):$}
    \begin{enumerate}

 \item[1.] Obtain $Meta_{f}$ of version $v$,  $Meta_{ID^f}\gets Get(ID^f, List_{AP}, v)$;
 \item[2.] Output  $Meta_{f}$.
\end{enumerate}
    }
    \end{framed}
    \caption{Metadata management algorithms}
    \label{fig:metadata}
\end{figure}

%% file: tex/5_disac.tex
\section{Disac Applied in Git}
\label{sec:disacingit}

DVAC can be readily applied in DVCS. In this context, we use the widely adopted DVCS, Git, as an example to illustrate the practical integration process. We introduce the integration of DVAC in Git  and elucidate the specific process for users utilizing Git.

For Git integration, we have implemented a Git filter driver, and associated it with a long-running filter \cite{longrunningfilter} for a specified subset of files by configuring Git attributes for the repository.
The long-running filter, supported since Git 2.11, transmits the file path and content to the driver, and allows a long-running process to operate multiple files instead of only one in the previous version of Git filter.
The registered filter driver includes the processes for two commands: clean and smudge.

\begin{itemize}
  \item The clean command is invoked when a user's modification is added from the workspace to the stage through \verb+git add+. In DVAC, as shown in Figure~\ref{fig:gitdesign}, the process for clean command initially computes the $ID^f$, retrieves the metadata of $ID^f$ by querying the smart contract, performs $ABE.Enc$ and $ABS.Sign$ with $AP_r$ and $AP_w$, and ultimately returns the ciphertext and signature to the Git client for storage in the stage.
  \item The smudge command is invoked when a user pulls the revisions from the remote/local repository or stage to the workspace through the commands such as \verb+git pull+ and  \verb+git checkout+. In DVAC, as shown in Figure~\ref{fig:gitdesign}, the process for the smudge command derives $ID^f$, queries the smart contract for the latest metadata, verifies the ABS signature with $AP_w$,  obtains the plaintext after performing ABE decryption with $AP_r$, and finally returns the plaintext file to the workspace.
\end{itemize}

\begin{figure}[pos=htbp]
  \centering
  \includegraphics[scale=0.6]{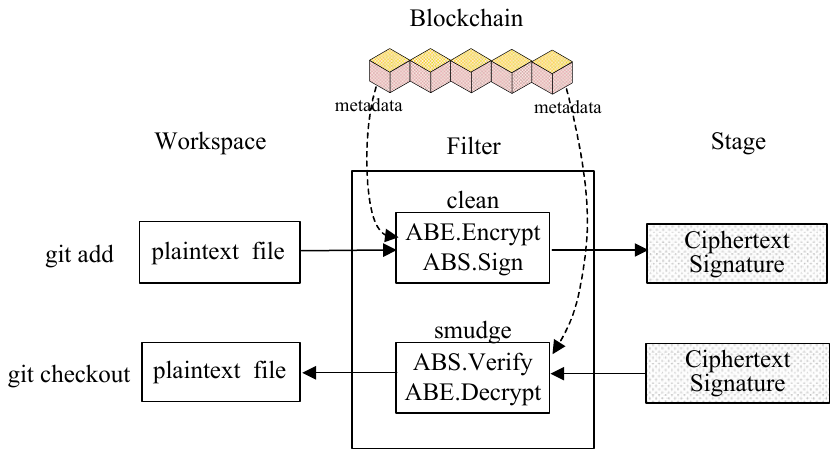}
  \caption{DVAC design integrated with Git.}\label{fig:gitdesign}
\end{figure}

We have developed a web interface based on web3.js \cite{web3js} for the data owner to manage the metadata. This includes the $Md\_Create$ function for establishing  the initial access control policy for a new $ID^f$, the $Md\_Modify$ function for altering the access control policy, the $Md\_Replace$ function for transferring ownership and the account, and the $Md\_Query$ function for retrieving  the correct metadata.
For users, the data exists in plaintext in the workspace, and the VCA protocol ensures that standard Git operations (such as \verb+git merge+ or conflict resolution) remain unaffected.
For the \verb+git rm+ operation, instead of deleting the file, DVAC clears the file content and stores the ABS signature (including the parameters) of the \verb+git rm+ operation in the file, which allows other users to calculate $ID^f$ and check the write permission. The encrypted and signed contents of the file are ultimately stored in the $./objects$ folder of the repository.

%% file: tex/6_security_analysis.tex
\section{Security Analysis}
\label{sec:security}
In this section, we present a concise security analysis of DVAC.

\begin{Theorem}
    DVAC provides file granularity, read permission security for unauthorized read users and write permission security for unauthorized write users.
\end{Theorem}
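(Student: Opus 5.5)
The plan is a reduction argument with two parts, read security and write security. Each part reduces a DVAC adversary to an adversary against one of the underlying primitives. We assume the CP-ABE scheme is IND-CPA secure against colluding users (the standard selective or adaptive security notion of \cite{bethencourt2007ciphertext}) and the ABS scheme is existentially unforgeable and collusion resistant (\cite{maji2011attribute}). We also assume the symmetric scheme $Enc/Dec$ is semantically secure and the registration signatures $Sig/Verify$ are EUF-CMA. File granularity comes from the binding established in the design: each file $f$ has a unique $ID^f = (Init\_CID|R\_Name|F\_Path)$, and the smart contract maps it, through $List_f$ and $List_{AP}$, to exactly one owner and one pair $(AP^f_r, AP^f_w)$ per version.

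\textbf{Step 1: integrity of the metadata.} First we argue that every honest user obtains the correct policy for $(ID^f, v)$.
\begin{itemize}
\item $Md\_Create$ rejects already used $ID^f$s. This rules out owner preemption.
\item $Md\_Create$, $Md\_Modify$ and $Md\_Replace$ all check that the caller's account equals the $Acc^o$ bound to $ID^o$ in $List_u$.
\item $List_u$ is populated only after $Verify(s, Info_{ID}, pk_a)$ and $Verify(s', Acc|cert, pk_{ID})$ succeed.
\end{itemize}
So an adversary who changes a policy either forges an authority or user signature, contradicting EUF-CMA, or breaks the blockchain's immutability and account authentication, which we assume. Replay attacks fail because honest clients always query the contract rather than trusting locally stored metadata.

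\textbf{Step 2: read security.} Fix a file $f$ with policy $AP^f_r$, and a coalition of malicious users whose attribute sets, even combined, do not satisfy $AP^f_r$. The honest-but-curious repository is treated as one more member of the coalition, holding no keys. Given a DVAC adversary that distinguishes the plaintexts of two versions, we build an ABE adversary. It forwards the coalition's key queries to the ABE challenger and embeds the challenge ciphertext as $Cipher_f$. If $Cipher_f$ is hybrid, so that ABE encapsulates a symmetric key, a second hybrid over $Enc$ is needed. The ABS signatures and metadata the adversary sees are simulated using $MK_w$ and public data, which the reduction generates itself. Since the policy is unsatisfied, every key query is a legal query in the ABE game. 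Hence any advantage against DVAC gives the same advantage against ABE.

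\textbf{Step 3: write security, and the main obstacle.} Honest clients accept a version $[Cipher, Sig]$ only if $ABS.Verify$ succeeds under the $AP^f_w$ returned by the contract. An accepted version from a coalition not satisfying $AP^f_w$ is therefore a valid signature on a fresh message or policy, which is an ABS forgery. The delicate part is that honest signatures on the same file exist, and the reduction must make sure the forged pair is actually new. Two attacks are the concern:
\begin{itemize}
\item Replaying an old honest $[Cipher, Sig]$. This is not an unauthorized write, because that content was already authorized.
\item Reusing a signature valid under an older, more permissive $AP^f_w$. This is handled by requiring the verifier to use the policy of the version queried from the contract, as in Step 1.
\end{itemize}
To make the freshness argument go through, I would first try binding $ID^f$ and the version $v$ into the signed message. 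The $git\ rm$ case is treated the same way, since it is a signed operation. Combining Steps 1 to 3 gives the theorem.
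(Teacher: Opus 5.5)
Your decomposition matches the paper's. Read security holds because a user whose attributes fail $AP^f_r$ cannot decrypt. Write security holds because anything $B$ produces, whether a signature under a self-chosen $AP'_w$ or an outright forgery, fails $ABS.Verify$ against the $AP^f_w$ that honest users fetch from the contract. The paper argues this only at the level of honest execution (``$ABE.Dec$ returns an error'', ``$ABS.Sign$ returns an error''), and it takes the correctness of the fetched metadata for granted here, proving it separately in Theorems 2 and 3. Your reductions to IND-CPA of CP-ABE and unforgeability of ABS, with metadata integrity folded in as Step 1, are what actually cover arbitrary and colluding adversaries, so that part is sound and more rigorous.

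Two smaller fixes:
\begin{itemize}
\item Requiring that the coalition's attribute sets ``even combined'' fail the policy leaves your collusion-resistance assumption unused. The legality condition in both games is only that no \emph{single} queried key satisfies the policy, and that stronger claim is what the threat model asks for.
\item In Step 1, the account check in $Md\_Modify$ and $Md\_Replace$ is passed by any registered attacker who supplies his own $ID^o$. What blocks him is the check that the supplied $ID^o$ equals the owner recorded in $Meta_f$ for $ID^f$. The paper's proofs of Theorems 2 and 3 rely on that check, although the printed algorithm omits it, so cite it instead.
\end{itemize}

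The genuine gap is Step 3: as written, your proof only closes after you change the scheme. DVAC signs $Cipher_f$ alone, so binding $ID^f$ and $v$ into the signed message proves the theorem for a different construction. The paper avoids the issue by tacitly taking an unauthorized write to be a \emph{modified} file $f'$ produced by $B$. Then $Cipher_{f'}$ is a fresh message, and a signature on it that verifies under the contract's $AP^f_w$ is directly an ABS forgery, with no binding needed. You should adopt that notion explicitly and say that it excludes transplants and rollbacks. DVAC as specified does accept an honestly signed $[Cipher, Sig]$ copied from another file with the same write policy, or from an earlier version of $f$. If you want those attacks covered, the binding is a required design change, not a proof step.

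Binding $v$ also does not handle the stale-policy case you list. A revoked writer can freshly sign under an older version's $AP^f_w$, which he still satisfies, and label the commit with that version; that is not an ABS forgery. What rules this out is the verifier fixing the metadata version independently of anything the writer supplies, for instance always using the latest version, as the smudge filter does. Step 1 should state that rule explicitly rather than appeal to ``the policy of the version queried from the contract''.
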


\begin{proof}
    In a file $f$ [$Meta_{f}$, $Cipher_{f}$, $Sig_{f}$], the attribute set $S_a$ of unauthorized read user $A$ does not meet $AP_r$, and the attribute set $S_b$ of unauthorized write user $B$ does not meet $AP_w$. After obtaining $Cipher_{f}$, user $A$ needs to decrypt the $Cipher_{f}$ to obtain the plaintext. Because $S_a$ does not satisfy $AP_r$ and returns an error when executing $ABE.Dec$, plaintext cannot be obtained by user $A$.  User $B$ reads file $f$ and makes modifications to obtain file $f^{\prime}$. When $ABS.Sign$ is executed based on $AP_w$ in $Meta_{f}$, an error is returned because $S_b$ does not meet $AP_w$, and the correct signature cannot be obtained. However, file $f^{\prime}$ [$Cipher_{f^{\prime}}$, $Sig_{f^{\prime}}$] is generated if user $B$ chooses a forged ($S_b$ satisfied) $AP'_w$ to sign, or a forged signature. Other users obtain the correct $AP_w$ through the smart contract after obtaining the file $f^{\prime}$, and return 0 after executing $ABS.Verify$($Sig_{f^{\prime}}$,$AP_w$). Therefore, all users who execute DVAC correctly will not accept the file $f^{\prime}$, meaning that the modifications made by unauthorized write user $B$ are invalid.
\end{proof}

\begin{Theorem} DVAC is resistant to data owner preemption and impersonation attacks targeting metadata.
\end{Theorem}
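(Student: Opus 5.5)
The argument proceeds by cases on the attacker's goal, following the informal style of the previous theorem. Each case is reduced to the checks the smart contract performs in the registration protocol of Section~\ref{subsec:init} and in the metadata management algorithms of Figure~\ref{fig:metadata}. Two properties of the blockchain are taken as given: the contract executes its code faithfully, and $List_u$, $List_f$ and $List_{AP}$ are append-only and cannot be tampered with. We also assume the signature scheme $Sig/Verify$ of the trusted institution is unforgeable, and that blockchain transactions are authenticated by the account's private key, so a request can carry $Acc'$ only if its sender controls $Acc'$.

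\emph{Preemption.} Suppose an attacker $E$ tries to become the data owner of a file $f$ that another owner $O$ is creating, or will create. Since $ID^f=(Init\_CID|R\_Name|F\_Path)$ is computed deterministically, the contract is the sole arbiter of ownership. $Md\_Create$ checks $ID^f$ against $List_f$ in step 3 and appends it atomically; the blockchain totally orders transactions, so exactly one creation request for a given $ID^f$ succeeds and every later one returns ``error 2''. The proof should therefore argue two things. First, no entity other than the first registered creator can overwrite the $ID^o$ of an existing $Meta_f$. Creation is refused, so the only remaining route is $Md\_Replace$ with $tag=2$, which requires passing the owner check in steps 1--2; this reduces preemption to the impersonation case below. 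Second, the ``create many files and delete them'' strategy gains nothing lasting. Deletion in DVAC (the \verb+git rm+ handling) never removes $ID^f$ from the append-only $List_f$, and the resulting ownership is tied to a registered, certified identity $ID^o$, so it is recorded on-chain and attributable to $E$. I expect this to be the main obstacle. Strictly speaking, the scheme does not prevent a front-running party from registering an $ID^f$ first. The defensible claim is that ownership is unique, globally consistent and bound to an authenticated identity, so no ambiguity arises. The proof should state this interpretation explicitly, perhaps noting that $Init\_CID$ is unpredictable before the repository's first commit, which limits who can compute $ID^f$ in advance.

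\emph{Impersonation.} Suppose $E$ tries to act as $O$, that is, to modify $AP^f_r$ or $AP^f_w$ or to replace the owner information. Every state-changing function ($Md\_Modify$, $Md\_Replace$) first retrieves $(Acc^o,ID^o)$ from $List_u$ and requires $Acc'=Acc^o$. So $E$ must either control $Acc^o$, which is ruled out by blockchain transaction authentication, or have bound its own account to $ID^o$ in $List_u$. That binding can be created in only two ways. One is registration, which requires a valid $cert$ with $Verify(s,Info_{ID},pk_a)=1$ and $Verify(s',Acc|cert,pk_{ID})=1$; forging either signature contradicts unforgeability, since $E$ holds neither $sk_a$ nor $sk_{ID^o}$. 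The other is $Md\_Replace$ with $tag=1$, which again requires passing the owner check. An induction over the sequence of contract transactions then shows that every binding in $List_u$ for $ID^o$ was authorized by the holder of $sk_{ID^o}$ or $Acc^o$.

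Finally, because $Md\_Query$ returns metadata straight from the contract's state, every honest user obtains the genuine $Meta_f$. Metadata forged off-chain is therefore never used, and combined with Theorem 1 this shows that neither attack affects read or write enforcement.
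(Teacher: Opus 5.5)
Your case split and the mechanisms you rely on match the paper's proof: the $List_f$ uniqueness check in $Md\_Create$ for preemption, and for impersonation the contract's identity check together with key secrecy and blockchain tamper-resistance. You add detail the paper lacks, namely transaction ordering and the induction over $List_u$ bindings. Your caveat that front-running is not literally prevented is also more candid than the paper's claim that preemption is impossible. The paper does give one further argument: every $Md\_Create$ is a paid transaction, which makes mass preemption costly, and you may want to include it. However, one step fails as written.

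You infer from steps 1--2 of $Md\_Modify$ and $Md\_Replace$ that ``$E$ must either control $Acc^o$ \dots or have bound its own account to $ID^o$ in $List_u$.'' In those steps, though, $ID^o$ is an argument supplied by the caller. The contract fetches the account bound to whatever identifier the caller names and compares it with $Acc'$. Nothing there ties that identifier to the owner recorded in $Meta_f$ for the targeted $ID^f$. A registered attacker can therefore call $Md\_Modify$, or $Md\_Replace$ with $tag=2$, on the victim's $ID^f$ while naming its own $ID^E$ and sending from its own $Acc^E$. It passes both steps without touching $Acc^o$, $sk_{ID^o}$, or the binding of $ID^o$ in $List_u$, so your unforgeability argument and your induction never engage. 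The same flaw breaks your reduction of ownership overwrite to impersonation: steps 1--2 authenticate the caller as \emph{some} registered identity, not as the owner of $f$.

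The missing ingredient is exactly what the paper's proof invokes: a consistency check that the invoker's identifier equals the $ID^o$ stored in the metadata of $ID^f$, which can be read via $Get(ID^f, List_{AP})$. State this check as part of the contract's behaviour for $Md\_Modify$ and for $tag=2$ of $Md\_Replace$; with it, the rest of your argument goes through. $Md\_Create$ and $tag=1$ are fine as you have them, because there the supplied $ID^o$ is itself the identity whose account binding is checked.
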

\begin{proof} 
  Against the two types of attacks:
  \begin{itemize}
  \item For the data owner preemption attack, the attacker will create a large number of file metadata in advance to overwrite the $ID^f$ that may be used to become the corresponding data owner in advance. In DVAC, the creation of metadata must make a $Req(create)$ to the smart contract. In the process of calling $Md\_Create$, it is necessary to determine whether the created $ID^f$ has been recorded in the $List_f$. Thus, the uniqueness check in $Md\_Create$ ensures that the invoker will select a globally unique and unused $ID^f$, making preemption of $ID^f$ impossible. In addition, metadata creation requires a transaction to be initiated and costs a certain amount of ETH, introducing the significant overhead for the attackers attempting frequent invocations to preempt a larger number of $ID^f$.
  \item For a data owner impersonation attack, an attacker may modify the $ID^o$ in the metadata with $Md\_Replace$, or by directly tampering with the blockchain. For $Md\_Replace$, as long as the user's private key remains secure, the attacker cannot forge a valid signature to impersonate the victim, and therefore the consistency check of the invoker's identifier and stored $ID^o$ in $Md\_Replace$ prevents the attacker from successfully invoking $Md\_Replace$. Direct tampering on the blockchain is prevented by the tamper-resistant feature of the smart contract (with the underlying blockchain).
\end{itemize}

Therefore, DVAC can resist data owner preemption and impersonation attacks targeting metadata.
\end{proof} 

\begin{Theorem} DVAC can protect metadata from tampering and resist metadata replay attacks.
\end{Theorem}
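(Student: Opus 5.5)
The argument is informal and mirrors the previous two theorems. It splits into two cases: tampering with metadata, and replaying stale metadata. Each case is reduced to properties of the TMM module from Figure~\ref{fig:metadata}, to the registration protocol of Section~\ref{subsec:init}, and to the tamper-resistance of the underlying blockchain.

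\textbf{Tampering.} An adversary can try to alter $Meta_f=(ID^o,ID^f,AP^f_r,AP^f_w)$ in one of three places: inside a shared file $[Meta_f,Cipher_f,Sig_f]$, through the smart contract interface, or in the blockchain state itself.

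For the first place, every honest client using EAC (or the clean/smudge filter in Git) discards the locally carried $Meta_f$. It always re-fetches metadata by calling $Md\_Query(ID^f,v)$. Tampering with the embedded copy therefore changes nothing. If the attacker also signed under a forged $AP'_w$, the argument of the first theorem applies: $ABS.Verify$ under the genuine $AP_w$ returns 0.

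For the second place, $Md\_Modify$ and $Md\_Replace$ first look up $(Acc^o,ID^o)$ in $List_u$ and require the invoking account $Acc'$ to equal $Acc^o$. The binding of $Acc^o$ to $ID^o$ was established at registration by verifying $cert$ under $pk_a$ and $s'$ under $pk_{ID}$. So a successful call needs either the owner's blockchain account key, or a forged certificate or signature. Both are ruled out by the unforgeability of $Sig$, exactly as in the impersonation case of the previous theorem. $Md\_Create$ cannot overwrite existing metadata, because of the $Exist(ID^f,List_f)$ check.

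For the third place, the append-only, immutable ledger prevents direct modification of $List_{AP}$ and $List_u$.

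\textbf{Replay.} The attacker holds an old $Meta_f$ in which he still satisfies $AP^f_w$ (or $AP^f_r$). He resends it together with a version signed under that old policy. The key step is that clients never accept metadata from peers. The policy used for verification comes from the contract, which keeps the full version history in $List_{AP}$ and gives every client the same global, consistent view. A victim who has not yet pulled the latest file still queries the contract for the metadata of the version being checked. For a new revision this is the current policy, under which the attacker's signature fails $ABS.Verify$.

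\textbf{Main obstacle.} Replay is the delicate part, because $Md\_Query$ is indexed by a version $v$. The attacker could label his new revision as belonging to an old metadata version $v_{\text{old}}$. I would first close this by arguing that the version index for a new commit must be the latest metadata version. This is what the smudge command does: it ``queries the smart contract for the latest metadata.'' A revision signed under a superseded policy is therefore always checked against the newest $AP_w$. Old versions remain verifiable only as historical commits whose content already existed before revocation, so replaying them grants no new write capability.

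For read permission, a revoked reader can still decrypt ciphertexts produced under old policies. New ciphertexts, however, are produced by clean under the queried latest $AP_r$, so replaying old metadata does not restore read access to new content. I would state this limitation explicitly. Combining the tampering and replay cases gives the theorem.
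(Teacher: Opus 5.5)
Your decomposition (contract interface, direct ledger tampering, replay) is the paper's, and the extra cases you add are fine: the embedded copy of $Meta_f$, $Md\_Create$, and $Md\_Replace$. There is, however, a gap in the step that carries the tampering case. The check you describe looks up $(Acc^o,ID^o)$ in $List_u$ under the \emph{caller-supplied} $ID^o$ and requires $Acc'=Acc^o$. This only authenticates the caller as \emph{some} registered identity; it never ties that identity to the owner of $ID^f$. A registered attacker passes his own $ID^o$ and his own account and clears step~2. Steps~3--4 of $Md\_Modify$ (or $tag=2$ of $Md\_Replace$) then rewrite the victim's $Meta_f$. So ``a successful call needs either the owner's blockchain account key, or a forged certificate or signature'' does not follow.

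The paper's proof rests on an additional requirement: the invoker's identifier must match the $ID^o$ \emph{stored} for $ID^f$. The prose of Section~\ref{sec:scheme} likewise says metadata ``can only be modified by the data owner $ID^o$.'' The pseudocode in Figure~\ref{fig:metadata} omits this check, so you must add it explicitly, by comparing the supplied $ID^o$ with the owner field of $Get(ID^f,List_{AP},v)$. Only then does your unforgeability reduction go through. The previous theorem's impersonation argument also relies on exactly this stored-$ID^o$ consistency check. That check, not just signature unforgeability, is what you need to import from it.

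Your treatment of replay starts as the paper's does: the contract is the only source of policy and gives a global view. Your closure of the version-labelling problem, however, is inconsistent.
\begin{itemize}
\item If smudge always verifies against the latest $AP_w$, then legitimate pre-revocation versions signed under the old policy fail too, since ABS verification is relative to the policy. This contradicts your own ``old versions remain verifiable.'' It also contradicts the EAC/VCA protocols, which verify $V_1,V_2$ against the metadata of the corresponding version.
\item If instead each version is checked against its own metadata version, a client needs a binding from file version to metadata version that the attacker cannot choose. Nothing in DVAC provides one. Commit ancestry and timestamps are attacker-controlled in a DVCS, so he can commit on top of a pre-revocation parent. The contract records nothing about which file versions were produced under which policy.
\end{itemize}
The paper's proof does not resolve this either; it simply asserts that the query interface yields ``the correct metadata for each version of the data.'' At that level of rigor, you should state the binding as an explicit assumption rather than claim to derive it from the smudge filter. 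Your remark that a revoked reader keeps access to old ciphertexts is correct, and the paper's proof passes over it.
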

\begin{proof}
    The attackers may attempt to tamper with $AP_r$ or $AP_w$ in the metadata to gain more privileges over the victim's data. To tamper with the metadata, the attacker may choose to modify it by invoking $Md\_Modify$, directly tamper with $AP_r$ and $AP_w$ stored in the blockchain, or replay the $AP_r$ and $AP_w$.
\begin{itemize}
  \item For $Md\_Modify$, DVAC requires the invoker's identifier to match the stored $ID^o$ for $ID^f$, and the attacker cannot forge a valid signature corresponding to the public key for the stored $ID^o$. Consequently, the attacker cannot successfully impersonate the victim to invoke $Md\_Modify$.
  \item Direct modifications to $AP_r$ and $AP_w$ are also infeasible due to the tamper-resistant nature of the blockchain.
  \item Metadata replay attacks cannot succeed either. The attackers cannot perform the read (write) operation once the corresponding permission is revoked, as correct users can obtain a global view of the metadata for $ID^f$ easily through the query interface of the smart contract, so they are therefore able to obtain the correct metadata for each version of the data.
\end{itemize} 
Therefore, DVAC can protect metadata from tampering and resist metadata replay attacks.
\end{proof}

In addition, the anti-collusion ability of DVAC depends on the anti-collusion ability of ABE and ABS algorithms.
The data owner can easily manage the metadata, while the user can obtain any valid metadata through the query interface, without any extra user-side verification.

%% file: tex/7_evaluation.tex
\section{Evaluation}
\label{sec:performance}
In this section, we provide the implementation details and performance evaluation of DVAC.

\begin{table*}[]
  \centering
\caption{Comparison of access control solutions for Git.}\label{tab:compare}
 \begin{threeparttable}
 \setlength{\tabcolsep}{1.25mm}{
\begin{tabularx}{\textwidth}{l l l l l l l l}
\hline

     &  git-remote-gcrypt & Git-secret& git-crypt       & GitHub Enterprise & Keybase-git  &  Gringotts & DVAC in Git
\\ \hline
read permission control\tnote{a}     & R &  R &  F    & R & R    &    F  &  F
\\ 
write permission control\tnote{a} &  \ding{56} &  \ding{56}  & \ding{56}     & B     & R  & B & F
\\ 

key maintenance\tnote{b}  & $O(N_u)$ & $O(N_u)$ & $O(N_u)$    & $O(N_u)$  & $O(N_u)$  & $O(N_u)$ & 2
\\
Distributed scenario\tnote{c} & \ding{56} & \ding{56} & \ding{56}  & \ding{56} & \ding{56}  & \ding{56} & \ding{51}
\\
Version control adaptation & - & - & -  & - & -  & \ding{51} & \ding{51}
\\
Data in repos\tnote{d} & P & P & P  & P & P  & C & C+S
\\\hline
\end{tabularx}}
\begin{tablenotes}
\scriptsize
    \item[a] Read/write permission control. " \ding{56}": not supported; "R": repository level; "B": branch level; "F": file level.
    \item[b] Key maintenance. " $N_u$": the number of users.
    \item[c] Distributed scenario: no centralized server for file storage or access control implementation. " \ding{56}": not supported or excessive encryption overhead; "\ding{51}": support and acceptance of expenses.
    \item[d] Data in repos. "P": plaintext; "C": ciphertext; "C+S": ciphertext+signature.
\end{tablenotes}
\end{threeparttable}
\end{table*}

\subsection{Functional Comparison}
\label{subsec:comparison}
We conducted a functional comparison of DVAC with existing access control solutions for Git, evaluating them based on five aspects: control granularity of read and write permissions, key maintenance overhead, distributed scenario support, and version control adaptation. Our findings indicate that the current Git access control schemes lack the ability to offer fine-grained write access control and struggle to support distributed scenarios, as shown in Table~\ref{tab:compare}.

Both git-remote-gcrypt \cite{git-remote-gcrypt} and Git-secret \cite{gitsecret} solely offer encryption protection at the repository level and do not support write access control. They utilize GnuPG (GPG) \cite{gpg} to maintain the public key of all users and provide distribution of symmetric keys. 
While git-remote-gcrypt uses AES-128 for data encryption, git-crypt \cite{gitcrypt}, similar in nature, employs AES-256 (in CTR mode) for data encryption and can provide file-granular encryption protection.
None of these solutions support write access control, and the public key maintained by GPG is either stored on the public key server or maintained locally by the user. The $O(N_u)$ key maintenance overhead also limits the expansion of distributed scenarios. 

GitHub Enterprise \cite{gitee} provides access control services based on the role-based access control (RBAC) model, offering repository-level read access control and branch-level write access control. However, the implementation of access control requires verification by the GitHub server. Keybase-git \cite{gitkeybase} is built on the Keybase Encrypted File System (KBFS), and uses the ACL model for access control. Users are also required to maintain $N_u$ public key information. While these approaches encrypt data during file transfer, the files remain in plaintext in the user's local repository, thus obviating the need for version control adaptation. Nevertheless, this method cannot guarantee the security of the user's local data, leaving it vulnerable to potential leakage.

Gringotts \cite{xu2023gringotts} adopts ABE to encrypt files in the repository, and uses the elliptic curve digital signature algorithm (ECDSA) to provide branch-level write permission control. While ABE allows users whose attributes meet the access policy to decrypt files, the remote server still needs to verify the signature to enforce write access control and maintain $N_u$ user public keys.
In the distributed scenario, there is no central server for access control implementation and file storage, and it is difficult to directly apply to existing DVCS systems (such as Git). In addition, coarse-grained write access control at the branch level requires users to manually generate branches and merge them, which seriously affects user experience in fully distributed scenarios. In fact, Gringotts is a targeted version control system with access control deployed, and its version control function is closely related to the access control function, so the scheme cannot be easily migrated to other DVCS, such as Git. So the above solutions make it difficult to complete the decentralized deployment and application in the DVCS.

DVAC, integrated with Git, adopts ABE and ABS algorithms to provide file-granularity read and write access control, enabling fine-grained data protection. Through attribute-based cryptographic algorithms, the encryption and signature are calculated according to the attributed access policy. Unlike traditional public key cryptographic algorithms, there is no need to perform encryption for each user and maintain a large number of user public keys. Each user only maintains their own ABE and ABS private keys. Furthermore, DVAC's read and write access control can be fully implemented on the client side, transparent to users, without the need for a central server. It supports distributed scenarios and is well-suited for distributed version control systems such as Git.

\subsection{Performance}
We have evaluated the performance overhead of enforcing read and write access control introduced by DVAC and compared it to existing schemes.

\begin{figure*}[ht]
  \centering
  \includegraphics[scale=0.32]{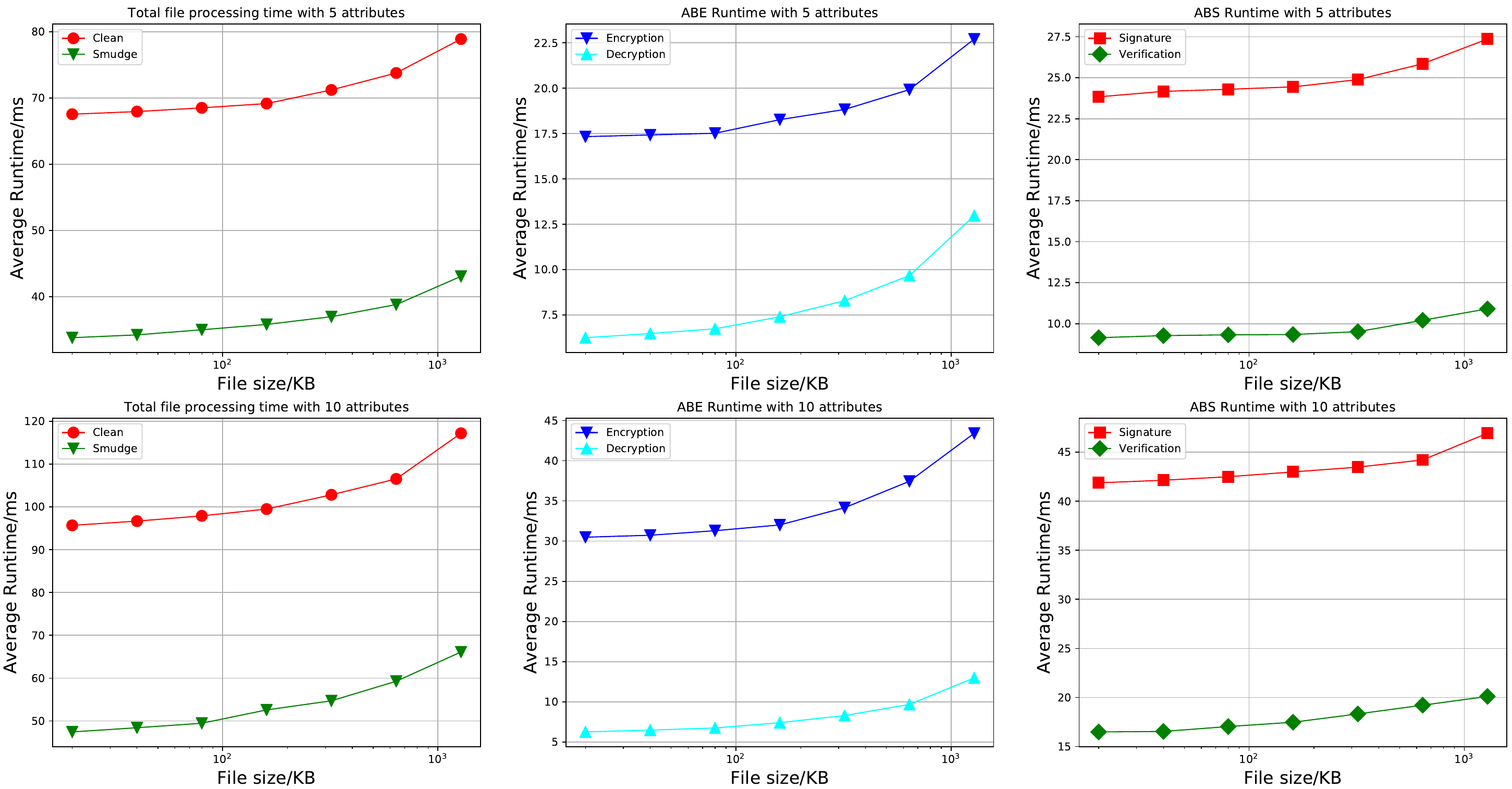}
  \caption{The total file processing time.}\label{fig:totaltime}
\end{figure*}

\noindent\textbf{Setting.}
We have implemented DVAC and integrated it with the Git client.
DVAC relies on ABE and ABS to enforce the access control, and integrates with Git through Git filters, i.e., using a clean module for ABE encryption and ABS signing before sharing data, and a smudge module to verify (ABS) signature and (ABE) decrypt the obtained data. The Git client integrated with DVAC was deployed in a  MacBook Pro laptop with 4 $\times$ 2.3 GHz Intel Core i7-1068NG7 CPUs and 16 GB 3733 MHz LPDDR4X RAM.
The metadata were stored in the Ethereum testnet Ropsten \cite{ropsten}.
We adopt the ABE algorithm \cite{AgrawalC17} and ABS algorithm \cite{maji2011attribute} in the prototype implementation,
 and implement them using C language based on  the PBC (pairing-based cryptography) library \cite{pbclibrary}.

To evaluate the performance overhead in the practical environment,
 we measured the introduced latency and storage for different access control policies (one with 5 attributes and the other with 10) and file sizes.
The range of file size was set as 0--1 MB, according to our statistics on GitHub dataset with Google BigQuery \cite{bigquery}, which is based on the GitHub Archive Project \cite{archive}. We analyzed the complete snapshots of the content of more than 2.8 million open-source GitHub repositories until 19 August 2023.
As shown in Table~\ref{tab:filesize},
   77.44\% files are smaller than 10KB, and 99.42\% files are smaller than 1MB.

\begin{table*}[pos=t,width=0.8\textwidth]
\centering
\caption{Time delta overhead for committing and checking files.}\label{tab:compare1}
\resizebox{0.8\linewidth}{!}{
\begin{tabular}{c|ccc|ccc}
\hline
\multirow{2}{*}{} & \multicolumn{3}{c|}{\textbf{Amortized commit time difference(s)}} & \multicolumn{3}{c}{\textbf{Amortized checkout time difference(s)}} \\ \cline{2-7} 
                  & DVAC(git)               & gringotts             & git-crypt             & DVAC(git)              & gringotts             & git-crypt             \\ \hline
bootstrap         & 0.256               & 0.16                  & 0.125               & 0.128              & 0.042                 & 0.106                 \\
d3                & 0.238               & 0.168                & 0.188                 & 0.155              & 0.056                 & 0.151                 \\
electron          & 0.217               & 0.167                 & 0.134                 & 0.143              & 0.056                 & 0.094                 \\
flutter           & 0.249               & 0.161                 & 0.179                 & 0.152              & 0.091                 & 0.145                 \\
linux             & 0.252               & 0.17                & 0.337                 & 0.198              & 0.207                 & 0.224                 \\
ohmyzsh           & 0.221               & 0.167                 & 0.097                 & 0.113              & 0.042                 & 0.017                 \\
react             & 0.234               & 0.184                 & 0.13                  & 0.125              & 0.068                 & 0.059                 \\
tensorflow        & 0.371               & 0.243                 & 1.061                 & 0.271              & 0.216                 & 0.421                 \\
vscode            & 0.218               & 0.14                  & 0.174                 & 0.184              & 0.135                 & 0.141                 \\
vue               & 0.249               & 0.159                 & 0.125                 & 0.112               & 0.036                 & 0.091                 \\ \hline
\textbf{Average}  & 0.2505              & 0.1719                & 0.255                 & 0.1581             & 0.0949                & 0.1449                \\ \hline
\end{tabular}
}
\end{table*}

\begin{table}[pos=t]
\centering
\caption{File size statistics on Github (until 2023.8.19).}
\label{tab:filesize}

\begin{tabular*}{\linewidth}{@{\extracolsep{\fill}}lrr}
\toprule
File size & Number & Percentage \\
\midrule
0--10KB   & 205563927 & 77.44\% \\
10KB--1MB & 57544225  & 21.98\% \\
1M--10M   & 2097378   & 0.79\%  \\
10M+      & 236104    & 0.09\%  \\
\bottomrule
\end{tabular*}

\end{table}

\subsubsection{Comparison}
We compared the performance of DVAC with existing Git access control schemes Gringotts \cite{xu2023gringotts} and git-crypt \cite{gitcrypt}.

We  selected the top 10 repositories with the most stars on GitHub for the experiment. To measure the commit delta time, we replayed 5000 commits in the new repository. DVAC encrypts and signs during \verb+git add+, so the extra time includes measuring the time to \verb+git add+ all changes in a commit. For Gringotts and git-crypt, this includes the extra time spent encrypting and decrypting files in a commit. Regarding checkout time, we \verb+git checkout+ the first 5000 commits in the clone's repository and measure the extra time. All repositories were configured with five access policies, each containing 10 attributes in DVAC and Gringotts, and a coarse-grained access policy for 10 users in git-crypt.

From Table~\ref{tab:compare1}, we can see that the time overhead used by DVAC is close to but slightly lower than git-crypt, and the overhead is higher than Gringotts. This is because Gringotts uses ECDSA for branch-level write access control, while DVAC employs ABS for finer file-level write access control, offering a better user experience for DVCS. Specifically, \verb+git add+ requires ABS signature, and \verb+git checkout+ requires an ABS verification, thus bringing some overhead, which is also explained in detail below. 
As shown in Section~\ref{subsec:comparison}, unlike Gringotts, DVAC operates without the need for a central server, enabling fully distributed access control. Furthermore, the file-level read and write access control limits permissions to each file version, ensuring that user collaboration, including merge operations, is not disrupted. This also facilitates the implementation of subsequent permission revocations and other policies, allowing DVAC to be used as a plug-in for various DVCS. According to the functional comparison in Table~\ref{tab:compare}, DVAC maintains a time cost within the millisecond range, providing the aforementioned security features without affecting the overall user experience.
Overall, DVAC minimizes changes to existing DVCS (such as Git) while implementing distributed mandatory read and write access control independent of a central server, and provides a more adaptable and secure access control scheme for DVCS at an acceptable cost.

\subsubsection{Latency}
We measured the latency for \verb+git add+ passing through the clean module and \verb+git checkout+ passing through the smudge module for different file sizes and access control policies.
As shown in Figure~\ref{fig:totaltime}, for an access control policy with five attributes, the processing time is less than 78 ms and 43 ms for \verb+git add+ and \verb+checkout+ with a 1 MB file, respectively, and the processing time reduced to 67 ms and 33 ms for a 10 KB file.
For an access control policy with 10 attributes, the processing time increases slightly, to 94 ms and 46 ms for \verb+git add+ and \verb+checkout+ with a 10 KB file, respectively.

In the measurement, the default access policy is not modified. It is assumed that the access policy is cached locally after the metadata is obtained from the blockchain for the first time.

The introduced latency consists of three parts: ABE and ABS processing, metadata querying and file transmission between the Git client and the Git filter driver.

\noindent{\textbf{ABE \& ABS processing.}} ABE and ABS processing requires more than 50\% of the whole processing time. As shown in Figure~\ref{fig:totaltime}, we measure the time overhead of ABE and ABS in DVAC for files with different sizes and access policies with 5 or 10 attributes. The ABE part contains AES-128 symmetric encryption. 
ABE and ABS processing time increases with file size, and the ABE processing time increases more obviously because the file content needs to be encrypted. ABS only hashes the contents of the file and is relatively unaffected.

ABE encryption and ABS processing time also increase with the number of attributes of the policy and user,  and ABE decryption time is not affected by attributes and policies. For example, for a used 10 KB file, the time for ABE encryption and ABS signature increases from 40 ms to 70 ms when the number of attributes in the policy increases from 5 to 10.
The \verb+git add+ operation needs more time than \verb+git pull+, as ABE encryption and ABS signing  require more exponentiations than ABE decryption and ABS verification.

\noindent{\textbf{Metadata querying.}}
DVAC synchronizes blocks of the Ethereum testnet and queries the metadata locally, which requires about 80 ms when the number of blocks is about 500,000.
However, the synchronization of blocks introduces a significant storage overhead; for example, 75 GB is required for 500,000 blocks.
To eliminate the storage requirement on the client, DVAC supports to query the metadata through a third-party service (e.g., Infura \cite{infura}), which needs 100 ms in our network environment. In addition, the efficiency of metadata querying can also be further improved through methods such as \cite{WuPGYX22}.

The transfer latency of metadata creation and modification is similar to the latency of metadata acquisition, both of which are relatively infrequently used and performed by the data owner  through web interfaces  without integration into the Git client.

\noindent{\textbf{File transmission between Git client and git filter driver.}}
In DVAC, the transmission of files between the Git client and the Git filter driver occurs using the long-running process protocol.
The latency for file transmission is contingent upon the file size, ranging from less than 7 ms for a 1 MB file to 1500 ms for a 10 MB file, and escalating to 98,000 ms for a 100 MB file (the maximum file size in GitHub).
 Nevertheless, given that sensitive data is typically stored in a separate file and $99.42\%$ of files are less than 1 MB,
 the latency for file transmission remains within acceptable bounds.

\subsubsection{Storage}
DVAC introduces two extra storage requirements for the Git client.
Firstly, each client needs to store the blocks synchronized from the Ethereum testnet, which amounts to 75 GB when the number of blocks reaches approximately 500,000. This storage overhead can be eliminated by using a third-party service (e.g., Infura \cite{infura}) for metadata querying.
Secondly, each file is stored in the form of ABE ciphertext and ABS signature, rather than the original data.
In our setting, this necessitates approximately 5.16 KB of extra storage for each file when a 10-attribute access control policy is employed.

\subsubsection{ETH Overhead}
In the metadata management part, DVAC need to interact with smart contracts, which inevitably involves the overhead of publishing transactions. In the experimental environment, an upload transaction of metadata requires about ETH 0.0005, while a query of metadata does not cost Ether. In addition, if DVCS is built on the basis of multiple alliances or organizations, metadata management can be carried out on the alliance chain, which can greatly reduce the transaction overhead while improving security. Therefore, the overhead of Ether is very small.

%% file: tex/8_discussion.tex
\section{Discussion}
DVAC controls read and write permissions using ABE and ABS, respectively, while adapting to DVCS characteristics. It is not tied to any specific DVCS but addresses universal version control and metadata management needs. We developed version control and metadata protocols tailored to Git, enhancing DVAC's compatibility with it. However, this approach is also applicable to other DVCS systems, which all require version control and distributed security management. Adapting DVAC to different DVCS involves adjusting based on each system's characteristics, primarily affecting the configuration environment and metadata management deployment, as discussed in Section \ref{sec:disacingit}.

Regarding attribute-based signcryption algorithms(ABSC), most existing ABSC algorithms support only a single policy mode, such as ciphertext-policy or key-policy.Hybrid schemes \cite{yu2017attribute, yu2020lh} combine KP-ABS with CP-ABE but still use a single access policy during the signcryption phase, and generate signature keys based on signature access policies, leading to the need for users to maintain multiple signature keys, especially with varying write permissions across files. In DVAC, we bind signature and decryption keys to attributes separately and bind signatures and ciphertexts to read and write access policies per file. This minimizes user key maintenance overhead and facilitates file permission control. Thus, although ABSC algorithms offer higher computational efficiency and lower communication costs, due to sacrificing some flexibility, existing ABSC algorithms cannot be directly applied to DVAC. Nevertheless, ABSC in principle supports the extension of different access policies, and its integration concept provides us with future optimization directions, so we will consider the possibility of combining ABE with ABS  further improve computing efficiency in the future.

%% file: tex/9_conclusion.tex
\section{Conclusion}
\label{sec:conclusion}
In this paper, we propose DVAC, an enforcing cryptographic access control scheme for fully distributed VCS, aimed at preventing unauthorized read or write access to sensitive data in distributed repositories. According to the operation requirements and version control characteristics of DVCS, DVAC sets the enforcing cryptographic access control protocol and version control adaptation protocol.
DVAC implements access control on distributed nodes, protecting data access without relying on a central server, and DVAC's architecture integrates seamlessly with existing DVCS, necessitating only a few straightforward configurations to ensure compatibility. The participation of smart contracts guarantees that only the data owner retains complete control over the shared data and the consistency and security of multi-node access control. We have implemented DVAC, integrated it with Git, and the performance evaluation demonstrates that the introduced overhead is acceptable in practical environment.

Our current scheme also has some limitations, such as the high computational cost of attribute-based cryptography and the undiscussed fine-grained permission dynamic management. Future research will focus on optimizing distributed system access control, such as multi-authority fine-grained retractions, and developing smart contract-assisted computing schemes for resource-limited devices, in order to improve system efficiency and the scope of application.

%% file: tex/10_appendix.tex
\section*{Acknowledgements}
This work was supported by National Natural Science Foundation of China (No.82090053). The authors wish to thank anonymous reviewers for their valuable comments and suggestions that improved this paper.